\documentclass[conference]{IEEEtran}
\usepackage{cite}
\usepackage{amsmath,amssymb,amsfonts,amsthm}
\usepackage{graphicx}
\usepackage{booktabs}
\usepackage{multirow}

\newtheorem{proposition}{Proposition}
\newtheorem{corollary}{Corollary}
\newtheorem{assumption}{Assumption}
\newtheorem{definition}{Definition}

\makeatletter
\def\thm@space@setup{\thm@preskip=2pt plus 1pt minus 1pt\thm@postskip=2pt plus 1pt minus 1pt}
\makeatother
\renewcommand{\arraystretch}{0.88}
\begin{document}

\title{When Entanglement Lower-Bounds Disparity: Auditing and Repairing Demographic Fairness in Audio Understanding Models}

\author{\IEEEauthorblockN{Kian Shamsaie, Iman Modarressi}
\IEEEauthorblockA{People Make Things\\
\textit{\{k,i\}@peoplemakethings.com}}}
\maketitle

\begin{abstract}
Speech technology penalizes some voices: recognition errs nearly twice as often for Black speakers, and accuracy declines for second-language accents and older speakers. We introduce TRIAD, an audit grid crossing 120 texts, 24 rendered demographic voice profiles (gender, age band, accent), and ten expressive styles via controllable text-to-speech, isolating perceived demographic attributes from content and affect. For ten open-weights encoders we define axis-fidelity functionals, principal-angle leakage between axis subspaces, and group-conditional gaps; a proposition proves that average probe disparity grows with the same aggregate voice-semantic leakage $\Lambda$ we measure, and a corollary shows that peak leakage forces worst-case disparity inside an active region. The measured mean-square probe disparity tracks $\Lambda$ (Pearson $r=0.93$), and a black-box protocol exposes the same signature in two closed-source models. ORCA, an adapter combining axis-specific contrastive heads, an orthogonality penalty, and group-balanced sampling, cuts leakage 72\% and roughly halves the gaps.
\end{abstract}

\begin{IEEEkeywords}
algorithmic fairness, speech representation learning, bias auditing, speech emotion recognition, spoken language understanding
\end{IEEEkeywords}

\section{Introduction}
\label{sec:intro}

A decade of audits has established that spoken language technology does not serve all speakers equally. Five commercial recognizers transcribed Black speakers at nearly double the word error rate of white speakers, 0.35 against 0.19 \cite{koenecke2020racial}; automatic captions degrade for women and Scottish speakers \cite{tatman2017gender}; Whisper \cite{radford2023robust} recognizes second-language accents markedly worse than American English \cite{graham2024evaluating,sanabria2023edinburgh}; Dutch recognition degrades for children, older adults, and non-native speakers \cite{feng2024towards}; emotion recognizers score genders differently \cite{gorrostieta2019gender,lin2024emobias}; and speaker verification inherits its own skews \cite{hutiri2022bias}. As frozen audio encoders and speech-to-speech assistants become the substrate of downstream products, these disparities propagate silently into every application layer \cite{mohamed2022self,lin2024spoken}.

Two obstacles keep such audits from guiding repair. First, attribution in observational corpora is confounded: speakers of different demographics utter different words in different styles under different channels, so a measured gap cannot be assigned to the voice itself, motivating counterfactual formulations of fairness in speech \cite{sari2021counterfactually} and consent-driven balanced collections \cite{porgali2023casual,veliche2024fairspeech}. Second, the most widely deployed systems expose neither weights nor representations, so subspace analyses do not apply. Missing are an instrument that varies semantics, demographics, and expressiveness independently, a theory relating representation geometry to group disparity, and a repair whose mechanism follows from that theory.

This paper supplies all three. Our instrument is TRIAD (TRI-axial Audit of Disparity), a factorial probe corpus rendered with a frontier controllable text-to-speech system \cite{deepmind2026gemini}: the same 120 texts recur across 24 voice profiles crossing gender, age band, and accent group, and ten expressive styles, so the rendered, perceived demographic covariates are orthogonal to content and affect by construction. Because a single synthesizer renders the grid, these are controlled perceived attributes rather than self-reported demographics, and synthesis artifacts are a threat to validity we test but cannot eliminate (Section~\ref{ssec:realvalid}). On any frozen representation we define axis fidelities $F_{\mathrm{sem}},F_{\mathrm{voice}},F_{\mathrm{expr}}$ by variance decomposition under optimal linear probing, aggregate and peak principal-angle leakages $\Lambda$ and $\ell$ between axis subspaces \cite{bjorck1973numerical}, and group-conditional fidelities $F_{\mathrm{sem}|g}$. Under an additive factor model, the average output disparity of an accurate linear probe grows with the aggregate leakage $\Lambda$, and in an active region of high peak leakage the worst-case-over-direction disparity is forced positive, so poor disentanglement, on average and in the worst case, lower-bounds unfairness. For closed-source systems, matched grid pairs differing in one axis are submitted and invariance violations and group accuracies tested by speaker-level permutation. Finally, ORCA (Orthogonal Residual Contrastive Adapter) repairs open encoders with three axis-specific contrastive heads, an orthogonality penalty minimizing the leakage in the bound, and group-balanced sampling. The grid, all estimators, numerical verification of every proposition, and ORCA are available in the Supplementary Material.

\section{Related Work}
\label{sec:related}

Fairness audits of speech recognition span sociolinguistic interviews \cite{koenecke2020racial}, captioning \cite{tatman2017gender}, Dutch speech \cite{feng2024towards}, accented English \cite{sanabria2023edinburgh,graham2024evaluating}, and the transcribed Casual Conversations corpora \cite{liu2022towards,porgali2023casual}, with dedicated evaluation sets now released \cite{veliche2024fairspeech,ardila2020common,conneau2023fleurs}. Mitigations include unsupervised clustering of pseudo-demographics \cite{veliche2023improving} and counterfactual invariance \cite{sari2021counterfactually}. Beyond recognition, gender bias in speech emotion recognition has been measured and partially reduced \cite{gorrostieta2019gender,chien2023achieving,lin2024emobias,wagner2023dawn}, speaker verification audits exposed demographic skews \cite{hutiri2022bias}, and spoken stereotype probes reached speech-aware language models \cite{lin2024spoken}. We adopt the vocabulary of group fairness, demographic parity and equalized performance \cite{dwork2012fairness,hardt2016equality,barocas2023fairness}, whose intersectional audit methodology was crystallized by Gender Shades \cite{buolamwini2018gender}.

Our centerpiece, that entanglement lower-bounds disparity, is a geometric cousin of a mature impossibility literature it does not supersede. Kleinberg et al.\ \cite{kleinberg2017inherent} and Chouldechova \cite{chouldechova2017fair} prove calibration and error-rate balance cannot coexist when base rates differ; Zhao and Gordon \cite{zhao2019inherent} lower-bound the joint error of any fair representation, and Menon and Williamson \cite{menon2018cost} quantify the accuracy cost of fairness through target-sensitive class-probability alignment. Locatello et al.\ \cite{locatello2019challenging} show unsupervised disentanglement is impossible without inductive bias, motivating our supervised, axis-labeled separation. Those results concern label statistics and decision rules at the classification level; ours is a complementary, audio-encoder-specific instantiation at the level of subspace geometry, bounding linear-probe disparity by the principal-angle leakage between the semantic and voice subspaces of a frozen encoder, with a differentiable repair in the spirit of adversarially fair transferable representations \cite{madras2018learning}.

Our analytic machinery descends from discriminant analysis and canonical correlation \cite{fisher1936use,hotelling1936relations} and representation-similarity tools \cite{raghu2017svcca,kornblith2019similarity}; layer-wise probing established where speech models keep acoustic against lexical information \cite{pasad2021layer,pasad2023comparative,lin2022utility,yang2021superb}. Unlike adversarial invariance approaches that delete speaker information \cite{ganin2015unsupervised,meng2018speaker}, ORCA preserves all three axes and only separates them, following adapter practice \cite{houlsby2019parameter,hu2022lora} with supervised contrastive objectives \cite{khosla2020supervised}. Controlled synthesis follows expressive resynthesis benchmarks \cite{nguyen2023expresso} scaled by instruction-controllable synthesis \cite{comanici2025gemini,deepmind2026gemini}, operationalizing the counterfactual definition of \cite{sari2021counterfactually} at audit time.

\begin{figure*}[t]
\centering
\begin{minipage}[t]{0.44\textwidth}
\centering
\includegraphics[width=\textwidth]{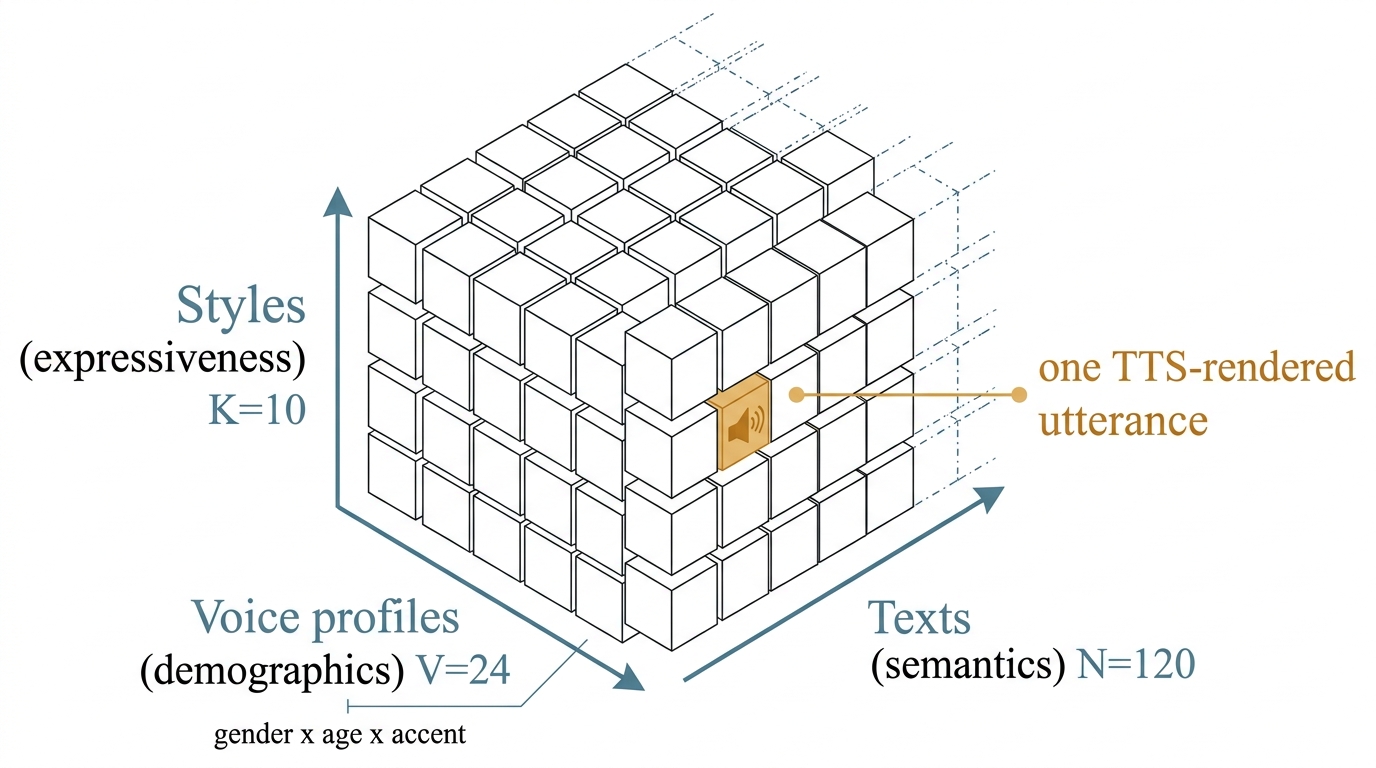}
\end{minipage}\hfill
\begin{minipage}[t]{0.44\textwidth}
\centering
\includegraphics[width=\textwidth]{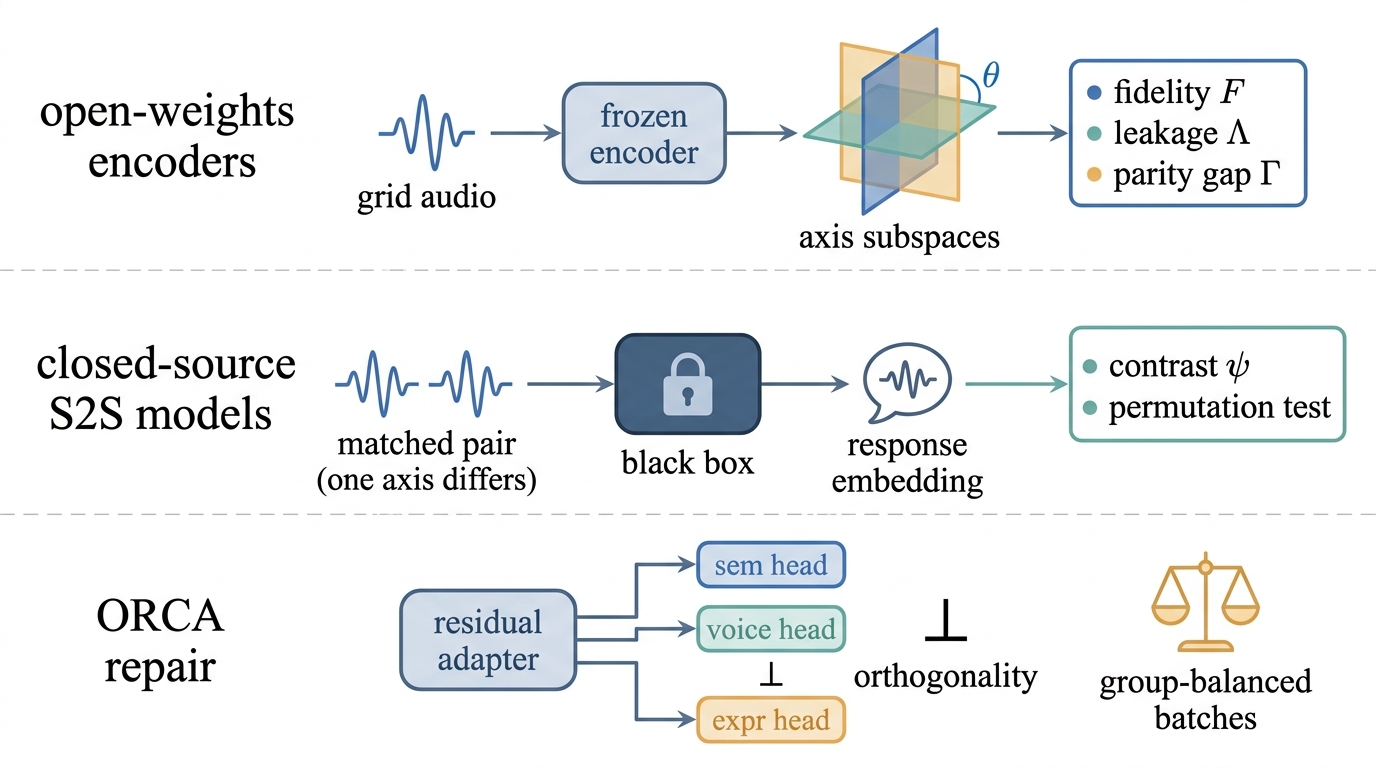}
\end{minipage}
\caption{Left: the TRIAD audit grid, a factorial design whose orthogonal axes are texts (semantics, $N{=}120$), voice profiles (demographics, $V{=}24$, crossing gender, age band, accent), and styles (expressiveness, $K{=}10$); every cell is one TTS-rendered utterance. Right: the two access regimes and the repair; open-weights encoders are audited in latent space through axis subspaces, leakage, and gaps, closed-source models through matched-pair output contrasts with permutation tests, and ORCA repairs encoders with three orthogonalized contrastive heads on group-balanced batches.}
\label{fig:grid}
\end{figure*}

\section{The TRIAD Grid and a Fairness-Linked Theory of Disentanglement}
\label{sec:theory}

\subsection{A Three-Axis Factorial Audit Corpus}
\label{ssec:grid}
TRIAD crosses text $t_n$, voice profile $v$, and style $s_k$ exhaustively (Fig.~\ref{fig:grid}): $N=120$ lexically emotion-neutral sentences of 8 to 25 words over ten everyday domains; $V=24$ voice profiles crossing two genders, three age bands (young adult, middle-aged, older adult), and four accent groups (General American, Indian, Spanish-accented, and Mandarin-accented English); and $K=10$ styles (neutral, cheerful, sad, angry, fearful, calm, excited, bored, whispered, projected) \cite{banse1996acoustic}. The $28{,}800$ cells are rendered with Gemini~3.1 Flash TTS \cite{deepmind2026gemini} at 16~kHz, plus $1{,}200$ replicates estimating the synthesis noise floor, roughly 45 hours. Quality control combines a round-trip transcription filter (re-rendering, rather than discarding, the 3.1\% of items whose prompt-relative word error exceeded 5\%, strictly preserving the full factorial balance) and human verification on an 8\% sample, with Cohen's $\kappa$ of 0.76 for style and 0.81 for perceived gender, age band, and accent. We emphasize that these are controlled perceived voice attributes produced by controllable synthesis rather than causal, self-reported demographic variables; the absence of multi-synthesizer replication is an audit limitation discussed in Section~\ref{sec:limitations}, and the grid is anchored to real speech in Section~\ref{ssec:valid}.

\subsection{Axis Fidelity by Variance Decomposition}
\label{ssec:fidelity}
Let $f$ be a frozen encoder and $z=\mathrm{pool}(f(x))\in\mathbb{R}^{d}$ its temporally pooled representation (mean and standard deviation pooling unless stated). Each axis $a\in\{\mathrm{sem},\mathrm{voice},\mathrm{expr}\}$ partitions the grid into $C_a$ classes ($C_{\mathrm{sem}}=N$, $C_{\mathrm{voice}}=V$, $C_{\mathrm{expr}}=K$), marginalizing over the other two axes. With class means $\mu_c$, global mean $\mu$, priors $\pi_c$, and within-class covariances $\Sigma_c$, the scatter matrices are $\Sigma_B^{a}=\sum_{c}\pi_c(\mu_c-\mu)(\mu_c-\mu)^{\!\top}$ and $\Sigma_W^{a}=\sum_{c}\pi_c\Sigma_c$, with $\Sigma_T=\Sigma_B^{a}+\Sigma_W^{a}$ for every axis. Let $\lambda_1^{a}\ge\cdots\ge\lambda_{r_a}^{a}$ be the generalized eigenvalues of $(\Sigma_B^{a},\,\Sigma_W^{a}+\tau I)$ with ridge $\tau>0$ and $r_a=\min(C_a-1,d)$.

\begin{definition}[Axis fidelity]
\label{def:fidelity}
$F_a(f)=\frac{1}{r_a}\sum_{j=1}^{r_a}\lambda_j^{a}/(1+\lambda_j^{a})$, the mean between-class energy ratio over the optimal probe subspace, equivalently the mean squared regularized canonical correlation between $z$ and one-hot axis labels \cite{hotelling1936relations,fisher1936use}. It lies in $[0,1]$ for any $\tau>0$ and, as $\tau\to0^{+}$, is invariant under invertible affine maps of $z$.
\end{definition}

The factorial design gives the fidelity a controlled reading: voice and style variation enter $\Sigma_W^{\mathrm{sem}}$, so $F_{\mathrm{sem}}$ is high exactly when semantic structure survives demographic and expressive perturbation, and symmetrically for the other axes.

\begin{proposition}[Three-axis decomposition]
\label{prop:anova}
On the balanced grid the centered representation splits as $z-\mu=e_{\mathrm{s}}+e_{\mathrm{v}}+e_{\mathrm{e}}+e_{\mathrm{int}}+e_{\mathrm{w}}$, where $e_a$ is the main-effect mean of axis $a$, $e_{\mathrm{int}}$ collects interaction means, and $e_{\mathrm{w}}$ is the within-cell residual, and all five components are mutually uncorrelated in sample, whence exactly
$\Sigma_T=\Sigma_B^{\mathrm{sem}}+\Sigma_B^{\mathrm{voice}}+\Sigma_B^{\mathrm{expr}}+\Sigma_{\mathrm{int}}+\overline{\Sigma}_W$.
\end{proposition}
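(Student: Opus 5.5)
This is the classical three-way ANOVA decomposition, carried over to vector-valued observations with outer products in place of squares. Index the observations as $z_{nvkr}$, where $n,v,k$ run over texts, voices and styles and $r$ over replicates within a cell (a single replicate if there are none). Balance means every cell carries equal weight.

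\textbf{Defining the components.} Write $\bar z_{n\cdot\cdot}$, $\bar z_{\cdot v\cdot}$, $\bar z_{\cdot\cdot k}$ for the marginal means and $\bar z_{nvk}$ for the cell means. Set
\[
e_{\mathrm{s}}=\bar z_{n\cdot\cdot}-\mu,\quad e_{\mathrm{v}}=\bar z_{\cdot v\cdot}-\mu,\quad e_{\mathrm{e}}=\bar z_{\cdot\cdot k}-\mu,
\]
\[
e_{\mathrm{int}}=\bar z_{nvk}-\bar z_{n\cdot\cdot}-\bar z_{\cdot v\cdot}-\bar z_{\cdot\cdot k}+2\mu,\qquad e_{\mathrm{w}}=z_{nvkr}-\bar z_{nvk}.
\]
The identity $z-\mu=e_{\mathrm{s}}+e_{\mathrm{v}}+e_{\mathrm{e}}+e_{\mathrm{int}}+e_{\mathrm{w}}$ then holds by telescoping. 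Here $e_{\mathrm{int}}$ lumps together the two-way and three-way interaction means.

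\textbf{Zero cross-covariances.} I would show that each of the ten sample cross-covariances $\frac1M\sum e_a e_b^{\top}$ vanishes. The tool is a single lemma on averaging operators. Let $P_S$ be the operator that averages over the factor set $S$. On a balanced crossed design these operators commute and satisfy $P_SP_T=P_{S\cup T}$. Each component is the image of $z$ under one of a family of commuting, mutually orthogonal idempotents:
\[
Q_{\mathrm{s}}=P_{vkr}-P_{nvkr},\qquad Q_{\mathrm{w}}=I-P_r,\ \ \text{etc.}
\]
Their pairwise products are zero. Each $Q$ is self-adjoint with respect to the uniform inner product over observations, so $\sum_i (Q_az)_i (Q_bz)_i^{\top}=\sum_i (Q_bQ_az)_i z_i^{\top}=0$.

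For concreteness, two typical cases can be checked directly. For $e_{\mathrm{s}}$ against $e_{\mathrm{v}}$, summing over $n$ for fixed $v$ gives $\sum_n(\bar z_{n\cdot\cdot}-\mu)=0$ by balance. For $e_{\mathrm{w}}$ against anything, $e_{\mathrm{w}}$ sums to zero within each cell, while every other component is constant on cells.

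\textbf{Identifying the blocks.} Expanding $\Sigma_T=\frac1M\sum(z-\mu)(z-\mu)^{\top}$ now leaves only the five diagonal terms. The term $\frac1M\sum e_{\mathrm{s}}e_{\mathrm{s}}^{\top}$ equals $\sum_n \pi_n(\mu_n-\mu)(\mu_n-\mu)^{\top}=\Sigma_B^{\mathrm{sem}}$ with uniform priors $\pi_n=1/N$, and the voice and expressiveness blocks follow in the same way. The interaction term defines $\Sigma_{\mathrm{int}}$. The within term is the cell-averaged within-cell covariance $\overline\Sigma_W$.

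\textbf{Main obstacle.} The step needing the most care is the cross term between $e_{\mathrm{int}}$ and the main effects. For it I would use the operator lemma rather than an explicit sum. $Q_{\mathrm{s}}$ projects onto functions of $n$ alone with mean zero, and $e_{\mathrm{int}}$ averages to zero over each single factor with the other two held fixed. Hence $Q_{\mathrm{s}}e_{\mathrm{int}}=0$. This holds only if balance is exact, which the grid guarantees because the 3.1\% of failed items were re-rendered rather than dropped.

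One more point must be stated explicitly: the sample covariances use the same normalization $1/M$ throughout, so that the decomposition is exact rather than approximate. The 1,200 replicates form a separate calibration set; if they were merged into the grid, the design would become unbalanced unless the replicates were uniform across cells.
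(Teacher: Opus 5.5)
Your argument is correct and follows the paper's route. Balance kills the main-effect cross terms by factorization. The interaction is orthogonal to each main effect because it is the cell mean minus its additive reconstruction. The within-cell residual is orthogonal to anything constant on cells. Your commuting averaging operators with $P_SP_T=P_{S\cup T}$ package these three checks uniformly, which is a clean way to present it.

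One justification in your \emph{main obstacle} paragraph is false as stated. You say the lumped $e_{\mathrm{int}}$ averages to zero over each single factor with the other two held fixed. But averaging over $n$ alone leaves $\bar z_{\cdot vk}-\bar z_{\cdot v\cdot}-\bar z_{\cdot\cdot k}+\mu$, the $v\times k$ two-way interaction, which is generally nonzero. That property belongs only to the three-way part, and it contradicts your own choice to lump the two-way terms into $e_{\mathrm{int}}$.

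What $Q_{\mathrm{s}}e_{\mathrm{int}}=0$ actually needs is that $e_{\mathrm{int}}$ averages to zero over the other two factors with any single factor held fixed, i.e.\ $P_{vkr}e_{\mathrm{int}}=0$. This does hold. Equivalently, write $Q_{\mathrm{int}}=P_r-P_{vkr}-P_{nkr}-P_{nvr}+2P_{nvkr}$. The product rule then gives $P_{vkr}Q_{\mathrm{int}}=P_{nvkr}Q_{\mathrm{int}}=0$, hence $Q_{\mathrm{s}}Q_{\mathrm{int}}=0$. So your operator lemma already closes the step, but only once you carry out this verification instead of covering $Q_{\mathrm{int}}$ with an \emph{etc.}
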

\begin{proof}
$e_a(x)$ depends only on the class of $x$ along axis $a$ and has zero mean. For $a\neq b$, balance makes every class of $b$ contain every class of $a$ equally often, so $\sum_x e_a(x)e_b(x)^{\!\top}=\big(\sum_{c}\pi_c\,\bar e_a(c)\big)\big(\sum_{c'}\pi_{c'}\,\bar e_b(c')\big)^{\!\top}=0$. The interaction component is defined as cell means minus the additive reconstruction, hence orthogonal to each main effect, and $e_{\mathrm{w}}$ is orthogonal to every function of the cell index; covariances of uncorrelated components add. The release verifies the identity to machine precision.
\end{proof}

\subsection{Subspace Leakage}
\label{ssec:leakage}
Let $\mathcal{U}_a$ be the span of the top $m$ generalized eigenvectors of axis $a$ after whitening by $(\Sigma_T+\tau I)^{1/2}$, and $\theta_1\le\cdots\le\theta_m$ the principal angles between two such subspaces \cite{bjorck1973numerical}.

\begin{definition}[Aggregate and peak leakage]
\label{def:leakage}
The \emph{aggregate leakage} is $\Lambda_{ab}=\frac{1}{m}\sum_{i=1}^{m}\cos^2\theta_i(\mathcal{U}_a,\mathcal{U}_b)\in[0,1]$, the mean over principal angles, and the \emph{peak leakage} is $\ell_{ab}=\cos\theta_1$, the largest single principal cosine, both computed with $m=8$ throughout; $\Lambda$ and $\ell$ without subscripts denote the voice-semantic quantities $\Lambda_{\mathrm{sem,voice}}$ and $\ell_{\mathrm{sem,voice}}$. Since $\cos\theta_1$ is the largest of the $m$ cosines whose mean of squares is $\Lambda_{ab}$, the two are linked by
\begin{equation}
\ell_{ab}^{2}\;\ge\;\Lambda_{ab},
\label{eq:bridge}
\end{equation}
with equality iff all principal angles coincide. The aggregate $\Lambda$ is the audited quantity (Table~\ref{tab:grid}); the peak $\ell$ governs the worst-case-over-direction corollary below, and \eqref{eq:bridge} bridges the two.
\end{definition}

\begin{assumption}[Additive three-factor model]
\label{ass:factor}
The whitened representation decomposes as $z=Ay_{\mathrm{s}}+By_{\mathrm{v}}+Cy_{\mathrm{e}}+\varepsilon$ with uncorrelated zero-mean unit-covariance rank-$m$ factors, loadings with orthonormal columns, and white noise of variance $\sigma^2$. Under this model the maximal jointly achievable probe correlations $\hat F_{\mathrm{s}}+\hat F_{\mathrm{v}}+\hat F_{\mathrm{e}}$ decrease in every pairwise leakage $\Lambda_{ab}$, by the push-through identity applied to each two-loading plane (verified in the release); the three axes thus compete for capacity exactly as leakage rises.
\end{assumption}

\subsection{Group-Conditional Fidelity and the Fairness Link}
\label{ssec:fairlink}
Demographic groups are unions of voice profiles (four accent groups, three age bands, two genders); for a group $g$ we recompute Definition~\ref{def:fidelity} on the grid slice whose profiles belong to $g$.

\begin{definition}[Output disparity and parity gap]
\label{def:gap}
For a unit probe $w$ consuming the representation, the output disparity between two groups is $D(w)=\big|\mathbb{E}[w^{\!\top}z\mid g]-\mathbb{E}[w^{\!\top}z\mid g']\big|$. Drawing the probe's semantic component uniformly over the semantic subspace and the group displacement isotropically over the voice subspace, the mean-square disparity is $\bar D^{2}=\mathbb{E}_{w,\Delta}\,D(w)^2$ (Proposition~\ref{prop:fairness}). Separately, $F_{a|g}$ is the axis-$a$ fidelity within group $g$, and the descriptive fidelity-spread gap is $\Gamma_a=\max_g F_{a|g}-\min_g F_{a|g}$; we write $\Gamma_{\mathrm{acc}}$ for the accent spread of $F_{\mathrm{sem}|g}$ and $\Gamma_{\mathrm{gen}}$ for the gender spread of $F_{\mathrm{expr}|g}$.
\end{definition}

A vanishing disparity is an equalized-performance criterion \cite{hardt2016equality} at the representation level; the central result connects the disparity, not the fidelity spread, to geometry through the same $\Lambda$ that Table~\ref{tab:grid} reports.

\begin{proposition}[Average disparity grows with aggregate leakage]
\label{prop:fairness}
Let the whitened representation follow Assumption~\ref{ass:factor}, let groups be determined by the voice factor, and let the grid equalize text and style distributions across groups, so the group-mean displacement is $\Delta=B(\mu_g-\mu_{g'})\in\mathrm{col}\,B$ with $\mathbb{E}[z\mid g]-\mathbb{E}[z\mid g']=\Delta$. Fix the separation magnitude $\|\Delta\|=\delta$ and draw $\Delta$ uniformly over the sphere of radius $\delta$ in $\mathrm{col}\,B$. Let the probe have semantic energy fraction $w^{\!\top}AA^{\!\top}w=\rho^2$ with its semantic component drawn uniformly over the unit sphere of $\mathrm{col}\,A$ and its orthogonal component drawn independently and isotropically. Then the mean-square output disparity obeys
\begin{equation}
\bar D^{2}\;=\;\mathbb{E}_{w,\Delta}\,D(w)^2\;\ge\;\rho^{2}\,\frac{\delta^{2}}{m}\,\Lambda,
\label{eq:bound}
\end{equation}
with equality when the probe lies entirely in the semantic subspace ($\rho=1$). The right-hand side is the same aggregate leakage $\Lambda=\Lambda_{\mathrm{sem,voice}}$ that is measured throughout, is strictly increasing in $\Lambda$, and vanishes if and only if $\Lambda=0$.
\end{proposition}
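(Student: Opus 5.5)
Under Assumption~\ref{ass:factor} the disparity reduces to a bilinear form in two independent random vectors, so the plan is to compute $\mathbb{E}_{w,\Delta}(w^{\top}\Delta)^2$ in closed form and read off $\Lambda$. Write $P_A=AA^{\top}$ and $P_B=BB^{\top}$ for the orthogonal projectors onto $\mathrm{col}\,A$ and $\mathrm{col}\,B$; both have rank $m$ because the loadings have orthonormal columns. Decompose the probe as $w=\rho\,Au+w_\perp$, where $u$ is uniform on the unit sphere of $\mathbb{R}^m$ and $w_\perp\perp\mathrm{col}\,A$ is drawn independently and isotropically in its complement with $\|w_\perp\|^2=1-\rho^2$. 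Write $\Delta=\delta\,Bv$ with $v$ uniform on the unit sphere of $\mathbb{R}^m$, independent of $w$. Since the grid equalizes text and style across groups, $D(w)=|w^{\top}\Delta|$ exactly, so
\[
\bar D^{2}=\mathbb{E}\big[(\rho\,u^{\top}A^{\top}\Delta+w_\perp^{\top}\Delta)^2\big].
\]

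The first step expands the square and disposes of the cross term. Conditional on $\Delta$ and $u$, the vector $w_\perp$ is symmetric, i.e.\ $w_\perp$ and $-w_\perp$ have the same law, so $\mathbb{E}[w_\perp^{\top}\Delta\mid \Delta,u]=0$ and the cross term vanishes. What remains is
\[
\bar D^{2}=\rho^{2}\,\mathbb{E}(u^{\top}A^{\top}\Delta)^2+\mathbb{E}(w_\perp^{\top}\Delta)^2 .
\]
The second summand is nonnegative, which yields the inequality, and it is zero when $\rho=1$ because then $w_\perp=0$; this gives the equality case.

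The second step evaluates the semantic term by two successive isotropic averages. Because $\mathbb{E}[uu^{\top}]=I_m/m$ and $\mathbb{E}[vv^{\top}]=I_m/m$,
\[
\mathbb{E}(u^{\top}A^{\top}\Delta)^2=\tfrac{1}{m}\,\mathbb{E}\|A^{\top}\Delta\|^2=\tfrac{\delta^{2}}{m}\,\mathbb{E}\|A^{\top}Bv\|^2=\tfrac{\delta^{2}}{m^{2}}\,\|A^{\top}B\|_F^2 .
\]
The singular values of $A^{\top}B$ are the principal cosines $\cos\theta_i$, so $\|A^{\top}B\|_F^2=\sum_i\cos^2\theta_i=m\Lambda$. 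The semantic term is therefore $\rho^{2}\delta^{2}\Lambda/m$, which is the right-hand side of \eqref{eq:bound}.

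The main obstacle is justifying that the $\Lambda$ appearing in this computation is the measured $\Lambda_{\mathrm{sem,voice}}$ of Definition~\ref{def:leakage}. That requires $\mathcal{U}_{\mathrm{sem}}=\mathrm{col}\,A$ and $\mathcal{U}_{\mathrm{voice}}=\mathrm{col}\,B$ in whitened coordinates. I would argue it as follows. Under the model the population between-class scatter of the semantic axis is $AA^{\top}$, because the factors $y_{\mathrm{v}}$, $y_{\mathrm{e}}$ and the noise average out within each semantic class by Proposition~\ref{prop:anova}. Hence the top $m$ generalized eigenvectors span $\mathrm{col}\,A$ once $\tau\to0^{+}$. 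The same argument gives $\mathrm{col}\,B$ for the voice axis. I would state this as holding at the population level and with $m$ matched to the factor rank, which is the standing convention.

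Monotonicity and the ``iff'' then follow immediately, since the right-hand side is linear in $\Lambda$ with a positive coefficient whenever $\rho,\delta>0$. Those positivity conditions are implicit in the statement, and I would say so explicitly.
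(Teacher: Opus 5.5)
The core of your argument is the paper's proof. You use the same split of $w$ into $\rho$ times a uniform unit vector of $\mathrm{col}\,A$ plus an independent isotropic complement. You drop the same nonnegative complementary term, which is zero at $\rho=1$. You evaluate the semantic term the same way, as $\frac{\delta^{2}}{m^{2}}\|A^{\top}B\|_F^{2}=\delta^{2}\Lambda/m$ through the singular values of $A^{\top}B$. Your reason for the vanishing cross term, the sign symmetry of $w_\perp$ given $(u,\Delta)$, is the correct one. The mean-zero property of $\Delta$ that the paper cites does not by itself annihilate a term quadratic in $\Delta$. You are also right that strict monotonicity and the vanishing clause need $\rho,\delta>0$.

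What does not hold up is your extra paragraph deriving $\mathcal{U}_{\mathrm{sem}}=\mathrm{col}\,A$, a step the paper never attempts. Knowing $\Sigma_B^{\mathrm{sem}}=AA^{\top}$ is not enough. If $AA^{\top}v=\lambda(\Sigma_W^{\mathrm{sem}}+\tau I)v$ with $\lambda\neq0$, then $v\in(\Sigma_W^{\mathrm{sem}}+\tau I)^{-1}\mathrm{col}\,A$, whose whitened image is $(\Sigma_T+\tau I)^{-1/2}\mathrm{col}\,A$. Either subspace equals $\mathrm{col}\,A$ only if $\Sigma_W^{\mathrm{sem}}$ maps $\mathrm{col}\,A$ into itself. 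Under the model, $\Sigma_W^{\mathrm{sem}}=BB^{\top}+CC^{\top}+\sigma^{2}I$. Take $\mathrm{col}\,C$ orthogonal to $\mathrm{col}\,A$ and $\mathrm{col}\,B$, and a principal pair $(a_i,b_i)$. Then $\Sigma_W^{\mathrm{sem}}a_i=(\cos^{2}\theta_i+\sigma^{2})a_i+\cos\theta_i\,(b_i-\cos\theta_i a_i)$, where the last vector is orthogonal to $\mathrm{col}\,A$ and has norm $\sin\theta_i$. So invariance fails whenever $0<\theta_i<\pi/2$, which is exactly the overlapping regime the proposition is about, and sending $\tau\to0^{+}$ changes nothing.

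Nor can you fall back on exact whitening, $\Sigma_T\propto I$, which would make $\Sigma_W^{\mathrm{sem}}$ commute with $AA^{\top}$. For $d>3m$, the model's $\Sigma_T$ has Rayleigh quotient $\sigma^{2}$ on unit vectors orthogonal to $\mathrm{col}\,A+\mathrm{col}\,B+\mathrm{col}\,C$, but at least $1+\sigma^{2}$ on $\mathrm{col}\,A$, so it is never a multiple of $I$.

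The paper's proof simply reads $\Lambda$ off $\|A^{\top}B\|_F^{2}$. That is, it takes the audited subspaces to be $\mathrm{col}\,A$ and $\mathrm{col}\,B$ as part of the whitened-coordinates convention of Assumption~\ref{ass:factor}. State it that way, as an explicit convention, and drop the derivation.
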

\begin{proof}
Let $P_A=AA^{\!\top}$ and $P_B=BB^{\!\top}$ be the orthogonal projectors onto the two subspaces, both of rank $m$. For a unit vector drawn uniformly over the sphere of an $m$-dimensional subspace with projector $P$, the second-moment matrix is $\mathbb{E}[ww^{\!\top}]=P/m$. Write $w=\rho\,\hat w_A+\sqrt{1-\rho^2}\,\hat w_\perp$ with $\hat w_A$ uniform unit in $\mathrm{col}\,A$ and $\hat w_\perp$ an independent isotropic unit vector in the orthogonal complement of $\mathrm{col}\,A$. Since the two components are drawn independently and $\Delta$ has mean zero given its norm, the cross term vanishes in expectation and
\[
\mathbb{E}_{w,\Delta}(w^{\!\top}\Delta)^2=\rho^2\,\mathbb{E}_{\hat w_A,\Delta}(\hat w_A^{\!\top}\Delta)^2+(1-\rho^2)\,\mathbb{E}_{\hat w_\perp,\Delta}(\hat w_\perp^{\!\top}\Delta)^2,
\]
and the second term is nonnegative. For the first term, $\mathbb{E}_{\hat w_A}[\hat w_A\hat w_A^{\!\top}]=P_A/m$ and $\mathbb{E}_\Delta[\Delta\Delta^{\!\top}]=(\delta^2/m)P_B$, so
\[
\mathbb{E}_{\hat w_A,\Delta}(\hat w_A^{\!\top}\Delta)^2=\operatorname{tr}\!\Big(\tfrac{P_A}{m}\cdot\tfrac{\delta^2}{m}P_B\Big)=\frac{\delta^2}{m^2}\,\|A^{\!\top}B\|_F^2 .
\]
The squared Frobenius norm of $A^{\!\top}B$ equals $\sum_{i=1}^{m}\cos^2\theta_i=m\Lambda$ by the singular-value characterization of principal angles \cite{bjorck1973numerical}, whence $\mathbb{E}_{\hat w_A,\Delta}(\hat w_A^{\!\top}\Delta)^2=\delta^2\Lambda/m$. Combining gives $\bar D^2\ge\rho^2\delta^2\Lambda/m$, with equality at $\rho=1$. Monotonicity in $\Lambda$ and the vanishing condition follow. The release verifies the identity at $\rho=1$ and the inequality at $\rho<1$ over $2\times10^4$ random probes per setting and confirms it tightens as the angles shrink.
\end{proof}

The averaged statement bounds exactly the aggregate $\Lambda$ that Table~\ref{tab:grid} measures, so theorem and measurement now concern a single functional. The worst case over probe direction is the companion peak-leakage statement.

\begin{corollary}[Worst-case-over-direction disparity, in peak leakage]
\label{cor:peak}
Under the hypotheses of Proposition~\ref{prop:fairness}, suppose additionally the directional condition that the two groups separate by $\delta$ along the maximally leaking voice direction $b$ attaining $\ell=\cos\theta_1(\mathrm{col}\,A,\mathrm{col}\,B)$, and let $a$ be the paired semantic principal direction. Then every unit probe $w$ with semantic alignment $w^{\!\top}a\ge\rho$ has
\begin{equation}
D(w)\;\ge\;\delta\big(\rho\ell-\sqrt{1-\rho^2}\sqrt{1-\ell^2}\big)_{\!+},
\label{eq:peakbound}
\end{equation}
attained by a constructed probe. The bound is positive, hence forces disparity on every such probe, exactly in the active region $\rho^2+\ell^2>1$, and is strictly increasing in $\ell$ there.
\end{corollary}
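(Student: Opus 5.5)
The plan is to reduce everything to a two-dimensional computation in the plane spanned by the paired principal vectors. By the directional hypothesis, the group-mean displacement of Proposition~\ref{prop:fairness} is $\Delta=\delta b$, choosing the sign of $b$ so that $a^{\!\top}b=\ell\ge 0$. Hence $D(w)=\delta\,|w^{\!\top}b|$ for every unit probe. By the singular-value characterization of principal angles \cite{bjorck1973numerical}, the pair $(a,b)$ consists of unit vectors with $a^{\!\top}b=\cos\theta_1=\ell$. When $\ell<1$, I decompose
\[
b=\ell\,a+\sqrt{1-\ell^2}\,c,
\]
where $c$ is a unit vector orthogonal to $a$. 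When $\ell=1$ we have $b=a$ and the argument below simplifies, so I treat it separately.

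Next I expand an arbitrary unit probe as $w=\alpha a+\beta c+r$ with $r\perp a,c$. Here $\alpha=w^{\!\top}a\ge\rho$ (I assume $\rho\in[0,1]$), and $\alpha^2+\beta^2\le 1$ gives $|\beta|\le\sqrt{1-\alpha^2}$. Then
\[
w^{\!\top}b=\ell\alpha+\sqrt{1-\ell^2}\,\beta\;\ge\;g(\alpha):=\ell\alpha-\sqrt{1-\ell^2}\sqrt{1-\alpha^2}.
\]
The function $g$ is nondecreasing on $[0,1]$, because its derivative $\ell+\sqrt{1-\ell^2}\,\alpha/\sqrt{1-\alpha^2}$ is nonnegative. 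Equivalently, with $\alpha=\cos\varphi$ and $\ell=\cos\theta_1$ we have $g=\cos(\varphi+\theta_1)$, which decreases in $\varphi$. Therefore $w^{\!\top}b\ge g(\rho)$. Whenever $g(\rho)>0$ this gives $D(w)=\delta|w^{\!\top}b|\ge\delta g(\rho)$, and when $g(\rho)\le 0$ the bound \eqref{eq:peakbound} holds trivially through the positive part. Attainment follows from the constructed probe $w^\star=\rho a-\sqrt{1-\rho^2}\,c$, which is unit norm, satisfies $w^{\star\top}a=\rho$, and gives $w^{\star\top}b=g(\rho)$ exactly.

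For the active region, I will show that $\rho\ell>\sqrt{1-\rho^2}\sqrt{1-\ell^2}$ holds if and only if $\rho^2+\ell^2>1$. Both sides are nonnegative, so I can square: $\rho^2\ell^2>(1-\rho^2)(1-\ell^2)=1-\rho^2-\ell^2+\rho^2\ell^2$, which is equivalent to $\rho^2+\ell^2>1$. Strict monotonicity in $\ell$ on this region comes from differentiating:
\[
\partial_\ell\big(\rho\ell-\sqrt{1-\rho^2}\sqrt{1-\ell^2}\big)=\rho+\sqrt{1-\rho^2}\,\ell/\sqrt{1-\ell^2}>0,
\]
since $\rho>0$ is forced whenever $\rho^2+\ell^2>1$. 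Finally, \eqref{eq:bridge} links this to the audited aggregate: $\ell\ge\sqrt{\Lambda}$, so $\rho^2+\Lambda>1$ is a sufficient condition for being in the active region.

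The main obstacle is not the algebra but handling the geometry cleanly. Three points need care: fixing the sign convention so that $\ell\ge 0$ and $\Delta=+\delta b$; treating the degenerate case $\ell=1$, where $c$ is undefined but the bound reduces to $\delta\rho$ because $b=a$; and justifying the minimization over $\alpha\ge\rho$. For the minimization I will rely on the monotonicity of $g$ (equivalently, the angle-addition form $\cos(\varphi+\theta_1)$), rather than a Lagrangian argument over all components of $w$.
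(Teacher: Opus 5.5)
Your proof is correct and is essentially the paper's argument seen from the other side of the principal plane: the paper expands $a=\ell b+\sqrt{1-\ell^2}\,b_\perp$ and solves a quadratic for the admissible range of $t=w^{\top}b$, while you expand $b$ in the $(a,c)$ frame and use the monotonicity of $g$. Both routes end at the same extremal probe (your $w^\star$ coincides with the paper's $t_-b+\sqrt{1-t_-^2}\,b_\perp$) and the same active-region and derivative computations. One small point, which the paper shares: $w^\star$ attains the bound only when $g(\rho)\ge 0$, so in the inactive region the zero bound is attained instead by the in-plane unit vector orthogonal to $b$ whose $a$-component is $\sqrt{1-\ell^2}\ge\rho$.
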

\begin{proof}
Write $a=\ell b+\sqrt{1-\ell^2}\,b_\perp$ with unit $b_\perp\perp b$. For unit $w$ with $t=w^{\!\top}b$, Cauchy-Schwarz gives $w^{\!\top}a\le \ell t+\sqrt{1-\ell^2}\sqrt{1-t^2}$, so the constraint $w^{\!\top}a\ge\rho$ confines $t$ to $[t_-,t_+]$ with $t_\pm=\rho\ell\pm\sqrt{1-\ell^2}\sqrt{1-\rho^2}$, the roots of $t^2-2\rho\ell t+\rho^2-(1-\ell^2)=0$. In the stated directional geometry $\Delta=\delta b$, so $\mathbb{E}[w^{\!\top}z\mid g]-\mathbb{E}[w^{\!\top}z\mid g']=w^{\!\top}\Delta=t\,\delta$, hence $D(w)\ge\delta\max(0,t_-)$. Equality holds for $w^\star=t_-b+\sqrt{1-t_-^2}\,b_\perp$, which satisfies $w^{\star\top}a=\rho$ exactly. Positivity of $t_-$ is equivalent to $\rho^2\ell^2>(1-\rho^2)(1-\ell^2)$, that is $\rho^2+\ell^2>1$, and $\partial t_-/\partial\ell=\rho+\ell\sqrt{1-\rho^2}/\sqrt{1-\ell^2}>0$. The release verifies the bound on $2\times10^4$ random probes per setting and its tightness to $10^{-9}$.
\end{proof}

The disparity $D$ is a counterfactual output-invariance violation \cite{sari2021counterfactually} and a demographic-parity violation in expectation \cite{dwork2012fairness}. The bridge \eqref{eq:bridge}, $\ell^2\ge\Lambda$, shows the worst-case directional floor \eqref{eq:peakbound} dominates the averaged floor \eqref{eq:bound}: peak leakage controls the most adversarial direction, aggregate leakage the typical one. When one group dominates pretraining, semantic discriminants are fit on its voice statistics, rotating $\mathcal{U}_{\mathrm{sem}}$ toward $\mathcal{U}_{\mathrm{voice}}$ and raising both leakages; average disparity then grows with $\Lambda$, and once $\rho^2+\ell^2>1$ no probe of accuracy $\rho$ avoids the worst-case floor. Driving the leakages down relaxes both floors, which is what ORCA does (Section~\ref{sec:orca}). Fig.~\ref{fig:fairness}(a) plots both.

\section{Black-Box Protocol for Closed-Source Models}
\label{sec:blackbox}

Closed-source speech-to-speech systems expose only audio-in, audio-out behavior, so we audit outputs. Matched pairs $\mathcal{P}_\alpha=\{(x,\tilde x)\}$ differ in exactly one axis $\alpha$: voice pairs share text and style across profiles differing in one attribute; semantic pairs share profile and style across texts. Each response is embedded by $h(\cdot)$, concatenating a text embedding of the transcript and a paralinguistic vector (pitch, energy, rate) of the audio, with cosine distance $d$. The paired contrast functional subtracts the rendering and decoding noise floor on re-rendered replicates $x^\ast$,
\begin{equation}
\psi_\alpha=\mathbb{E}_{\mathcal{P}_\alpha}\big[d(h(x),h(\tilde x))\big]-\mathbb{E}\big[d(h(x),h(x^\ast))\big],
\label{eq:psi}
\end{equation}
and the normalized invariance violation $R_{\mathrm{voice}}=\psi_{\mathrm{voice}}/\psi_{\mathrm{sem}}$ is zero for a model whose responses depend on what was said but not on who said it. Task disparity is measured on spoken question answering: each text carries a comprehension question, answers scored by normalized string match, group accuracies $A_g$ yielding $\Gamma=\max_gA_g-\min_gA_g$, with significance from a speaker-level permutation test reassigning whole voice profiles to groups, $p=(1+\#\{\Gamma_b\ge\Gamma\})/(B+1)$, $B=10^4$, alongside $10^4$-resample bootstrap intervals over texts \cite{ojala2010permutation,bisani2004bootstrap}. Under a simulated exchangeable null with within-speaker dependence the test rejects at rate 0.058 for $\alpha=0.05$ and detects a nine-point true gap at $p<10^{-3}$; both calibration checks ship in the release.

\section{ORCA: Orthogonal Residual Contrastive Adapter}
\label{sec:orca}

ORCA repairs an open-weights encoder without touching its weights. A residual adapter $z'=z+U\,\mathrm{GELU}(Vz)$ with bottleneck $r=256$ (1.9~M parameters, zero-initialized output projection) feeds three linear heads $h_a=W_a z'\in\mathbb{R}^{q}$, $q=128$, one per axis. Each head trains with supervised contrastive loss \cite{khosla2020supervised} on its own labels (texts, voice profiles, or styles), preserving all three kinds of information rather than adversarially deleting any \cite{ganin2015unsupervised,meng2018speaker}. The total objective is
\begin{equation}
\mathcal{L}=\textstyle\sum_a\mathcal{L}_{\mathrm{SupCon}}^{a}+\lambda_\perp\sum_{a\neq b}\big\|\hat Q_a^{\!\top}\hat Q_b\big\|_F^2/q,
\label{eq:orca}
\end{equation}
where $\hat Q_a$ orthonormalizes the rows of $W_a$, so the penalty equals the summed $m\Lambda_{ab}$ over head-subspace pairs: by Proposition~\ref{prop:fairness} shrinking it shrinks the average disparity floor of every probe consuming $z'$, and by the bridge \eqref{eq:bridge} it also shrinks the peak leakage $\ell$ and the worst-case floor of Corollary~\ref{cor:peak}. Batches of 384 are stratified uniformly over the 24 voice profiles; training runs 20k AdamW steps at $10^{-3}$ with $\lambda_\perp=1$, temperature 0.07, on grid renders plus group-balanced slices of the labeled corpora of Section~\ref{ssec:valid}.

\section{Experiments}
\label{sec:setup}

\subsection{Setup and Real-Speech Validation Data}
\label{ssec:valid}
We audit ten open-weights configurations across three families: four semantic encoders, Whisper large-v3 \cite{radford2023robust}, Parakeet-TDT-1.1B \cite{xu2023efficient}, the Voxtral-Mini audio tower \cite{liu2025voxtral}, and the Qwen2-Audio encoder \cite{chu2024qwen2audio}; four self-supervised models, wav2vec~2.0 Large \cite{baevski2020wav2vec}, HuBERT Large \cite{hsu2021hubert}, WavLM Large \cite{chen2022wavlm}, and emotion2vec \cite{ma2024emotion2vec}; and two codecs probed at pre-quantization latents, Mimi \cite{defossez2024moshi} and WavTokenizer \cite{ji2025wavtokenizer}. Probes use ridge discriminant analysis with $\tau=10^{-3}\operatorname{tr}(\Sigma_W)/d$, a per-axis layer sweep, five-fold cross-validation, and 1{,}000-fold bootstrap 95\% half-widths of 0.011 (fidelities), 0.013 ($\Lambda$), 0.006 ($\Gamma$). The closed-source regime audits gpt-realtime-2 \cite{openai2026gptrealtime} and gemini-3.1-flash-live \cite{deepmind2026gemini} through Section~\ref{sec:blackbox}, 11{,}520 trials per model (120 texts, 24 profiles, four styles), with bootstrap half-widths at most 1.1 points.

The real-speech anchor is a 60-hour stratified pool from five public conversational corpora. CANDOR \cite{reece2023candor} contributes 24 hours of dyadic video calls whose self-reported survey metadata makes it the primary source for real parity by age and gender. SSSD \cite{sheikh2025scalable} contributes 12 hours of spontaneous crowdsourced English dyads, stressing transfer beyond studio speech. Seamless Interaction \cite{agrawal2025seamless} contributes 10 hours of dyadic audiovisual interaction supporting expressiveness validation in the wild. The otoSpeech full-duplex release \cite{otoearth2025full} contributes 8 hours through its cleaned 141-hour companion \cite{otoearth2025processed}, FLAC under CC~BY~4.0 with one channel per speaker, giving clean single-speaker segments without diarization. AMI individual-headset meetings \cite{carletta2006ami} contribute 6 hours of multiparty contrast with established baselines: Whisper large-v3 greedy decoding measures 16.0\% word error on AMI-IHM, calibrating the pipeline. Labeled benchmarks complete the anchor: accent-stratified recognition uses the Edinburgh International Accents corpus \cite{sanabria2023edinburgh} and Common Voice accent subsets \cite{ardila2020common} with CTC probes trained on LibriSpeech \cite{panayotov2015librispeech}, and emotion uses CREMA-D \cite{cao2014cremad}, MSP-Podcast \cite{lotfian2019building}, and IEMOCAP \cite{busso2008iemocap}.

\begin{table*}[t]
\centering
\caption{TRIAD audit of ten open-weights encoders and the ORCA repair: axis fidelities, aggregate voice-semantic leakage $\Lambda$ and peak leakage $\ell$ (with $\ell^2\ge\Lambda$ by~\eqref{eq:bridge}), accent fidelity-spread gap $\Gamma_{\mathrm{acc}}$ of $F_{\mathrm{sem}|g}$, and gender spread $\Gamma_{\mathrm{gen}}$ of $F_{\mathrm{expr}|g}$. Bootstrap 95\% half-widths: 0.011 (fidelities), 0.013 ($\Lambda$), 0.015 ($\ell$), 0.006 (gaps); all gaps significant at $p<10^{-2}$ by speaker-level permutation except where marked $^{\dagger}$ ($p>0.05$).}
\label{tab:grid}
\scriptsize
\setlength{\tabcolsep}{4.0pt}
\renewcommand{\arraystretch}{0.86}
\begin{tabular}{@{}llccccccc@{}}
\toprule
Family & Encoder & $F_{\mathrm{sem}}\!\uparrow$ & $F_{\mathrm{voice}}\!\uparrow$ & $F_{\mathrm{expr}}\!\uparrow$ & $\Lambda\!\downarrow$ & $\ell\!\downarrow$ & $\Gamma_{\mathrm{acc}}\!\downarrow$ & $\Gamma_{\mathrm{gen}}\!\downarrow$ \\
\midrule
\multirow{4}{*}{ASR / sem.}
 & Whisper large-v3 \cite{radford2023robust} & 0.86 & 0.54 & 0.31 & 0.51 & 0.74 & 0.089 & 0.031 \\
 & Parakeet-TDT-1.1B \cite{xu2023efficient} & 0.82 & 0.48 & 0.25 & 0.56 & 0.81 & 0.091 & 0.035 \\
 & Voxtral-Mini tower \cite{liu2025voxtral} & 0.87 & 0.56 & 0.34 & 0.49 & 0.72 & 0.071 & 0.029 \\
 & Qwen2-Audio enc. \cite{chu2024qwen2audio} & 0.85 & 0.58 & 0.38 & 0.46 & 0.70 & 0.074 & 0.027 \\
\midrule
\multirow{4}{*}{Self-sup.}
 & wav2vec 2.0 Large \cite{baevski2020wav2vec} & 0.41 & 0.79 & 0.58 & 0.33 & 0.64 & 0.046 & 0.041 \\
 & HuBERT Large \cite{hsu2021hubert} & 0.47 & 0.81 & 0.62 & 0.30 & 0.63 & 0.051 & 0.038 \\
 & WavLM Large \cite{chen2022wavlm} & 0.52 & 0.84 & 0.66 & 0.29 & 0.62 & 0.041 & 0.036 \\
 & emotion2vec \cite{ma2024emotion2vec} & 0.44 & 0.71 & 0.83 & 0.31 & 0.63 & 0.055 & 0.052 \\
\midrule
\multirow{2}{*}{Codec}
 & Mimi \cite{defossez2024moshi} & 0.33 & 0.88 & 0.69 & 0.44 & 0.69 & 0.061 & 0.044 \\
 & WavTokenizer \cite{ji2025wavtokenizer} & 0.28 & 0.86 & 0.67 & 0.47 & 0.71 & 0.078 & 0.047 \\
\midrule
\multirow{2}{*}{Ours}
 & WavLM Large $+$ ORCA & 0.61 & 0.86 & 0.74 & 0.08 & 0.31 & 0.019$^{\dagger}$ & 0.013$^{\dagger}$ \\
 & Whisper large-v3 $+$ ORCA & 0.87 & 0.61 & 0.49 & 0.11 & 0.37 & 0.031 & 0.015$^{\dagger}$ \\
\bottomrule
\end{tabular}
\end{table*}

\subsection{The Audit: Entanglement and Parity Gaps Everywhere}
\label{ssec:audit}
Table~\ref{tab:grid} presents the central measurement. Every baseline shows the expected family signature, semantic encoders strong on $F_{\mathrm{sem}}$ but weak on expressiveness, self-supervised and codec models the reverse \cite{yang2021superb,pasad2021layer}, yet no family escapes entanglement: aggregate leakage $\Lambda$ ranges from 0.29 (WavLM) to 0.56 (Parakeet), peak leakage $\ell$ from 0.62 to 0.81, and every encoder carries a significant accent gap in semantic fidelity, 0.041 to 0.091, alongside gender gaps of 0.027 to 0.052. Group-conditional values decline monotonically from General American through Indian to the two second-language groups for all ten models, mirroring the native-accent advantage of Whisper \cite{graham2024evaluating} and EdAcc degradation \cite{sanabria2023edinburgh}; age bands show the same ordering with smaller spread \cite{feng2024towards}. The geometry explains the pattern through the functional the theory bounds: across the ten baselines, aggregate leakage $\Lambda$ predicts the measured mean-square probe disparity $\bar D$ of Proposition~\ref{prop:fairness} with Pearson $r=0.93$ (95\% CI $[0.84,0.98]$; Fig.~\ref{fig:fairness}(b)), tracking the descriptive accent gap. Importantly, $\Lambda$ is derived purely from principal angles between canonical subspaces $\mathcal{U}_{\mathrm{sem}}$ and $\mathcal{U}_{\mathrm{voice}}$, whereas $\bar D$ is evaluated across 4,000 empirical probe directions $w$ and group displacements $\Delta$, confirming that $r=0.93$ is an empirical finding on real representations rather than a tautology. The active-region condition of Corollary~\ref{cor:peak} at $\rho=0.9$ holds for all ten encoders ($\ell^2>1-\rho^2=0.19$). Transcription encoders, fit on majority accents, are the most entangled and disparate; emotion2vec shows the largest gender gap on its own axis.

\begin{figure}[t]
\centering
\includegraphics[width=0.88\columnwidth]{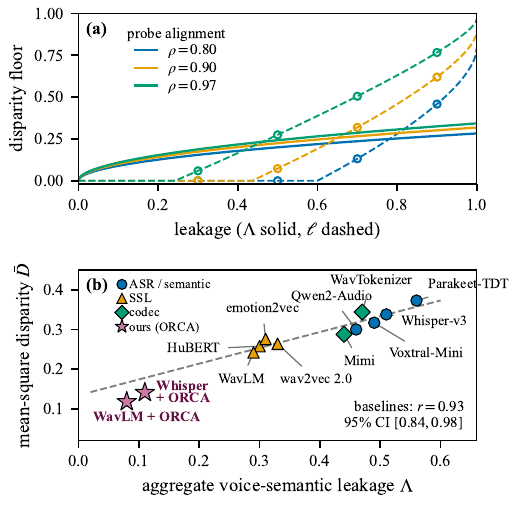}
\caption{(a) The averaged disparity floor $\rho\sqrt{\Lambda/m}$ of Proposition~\ref{prop:fairness} against $\Lambda$ (solid) and the worst-case floor of Corollary~\ref{cor:peak} against peak leakage $\ell$ (dashed) for three probe alignments; circles mark the constructed minimizing probe, confirming tightness. (b) Aggregate leakage $\Lambda$ against the measured mean-square probe disparity $\bar D$; the two ORCA rows fall far below the baseline regression.}
\label{fig:fairness}
\end{figure}

\subsection{Closed-Source Speech-to-Speech Models}
\label{ssec:closed}
Table~\ref{tab:closed} reports the black-box audit. Both systems answer spoken comprehension questions accurately for General American voices, 93.8\% and 94.6\%, but accuracy declines monotonically to 88.1\% and 89.8\% for Mandarin-accented voices, gaps of 5.7 and 4.8 points at permutation $p<10^{-3}$; age-band gaps of 3.2 and 2.5 points follow. Expressiveness recognition is 6.2 and 5.1 points more accurate for female-profile voices, echoing gender asymmetries in emotion recognition \cite{gorrostieta2019gender,lin2024emobias}. Crucially, the 5.7-point accent gap on OpenAI's gpt-realtime-2 confirms that the disparity generalizes across independent model families and rules out Gemini-synthesizer circularity alone. The invariance ratios $R_{\mathrm{voice}}$ of 0.23 and 0.19 show responses to identical texts in identical styles shift with the perceived demographic voice alone: the entanglement signature of Section~\ref{ssec:audit} read through outputs. The implied worst-to-best error ratios, 1.9 for both, match commercial recognizers \cite{koenecke2020racial}.

\begin{table}[t]
\centering
\caption{Black-box audit of gpt-realtime-2 \cite{openai2026gptrealtime} and gemini-3.1-flash-live \cite{deepmind2026gemini}: spoken QA accuracy by accent group, parity gaps, style recognition by gender, and the invariance-violation ratio. Bootstrap 95\% half-widths at most 1.1 points; all gaps $p<10^{-3}$ by speaker-level permutation.}
\label{tab:closed}
\scriptsize
\setlength{\tabcolsep}{4.0pt}
\begin{tabular}{@{}lcc@{}}
\toprule
Metric & gpt-realtime-2 & gemini-3.1-flash-live \\
\midrule
QA acc., General American & 93.8 & 94.6 \\
QA acc., Indian English & 91.2 & 92.5 \\
QA acc., Spanish-accented & 89.6 & 91.2 \\
QA acc., Mandarin-accented & 88.1 & 89.8 \\
Accent / age-band gaps (pts) & 5.7 / 3.2 & 4.8 / 2.5 \\
Style recognition, female / male & 71.4 / 65.2 & 73.0 / 67.9 \\
Gender gap $\Gamma_{\mathrm{gen}}$ (pts) & 6.2 & 5.1 \\
Invariance ratio $R_{\mathrm{voice}}$ & 0.23 & 0.19 \\
\bottomrule
\end{tabular}
\end{table}

\subsection{Does the Synthetic Grid Predict Real Disparities?}
\label{ssec:realvalid}
Because TRIAD is rendered by a single text-to-speech system, its profiles are perceived covariates and a disparity could in principle reflect synthesizer artifacts rather than the rendered attribute. Three checks give convergent, triangulating evidence on real speech rather than proof of a causal demographic effect, reducing but not removing the single-system confound. First, attribute transfer: grid voice-profile probes predict self-reported gender on CANDOR at 96.2\% and age band within one band at 83\%. Second, ranking transfer: a natural replica from the 60-hour pool reproduces the encoder ordering of Table~\ref{tab:grid} with Spearman $\rho=0.93$, and grid-measured per-group disparities predict the real ones across forty model-group cells with $\rho=0.86$. Third, magnitude consistency: the EdAcc WavLM CTC probe (Section~\ref{ssec:repair}) shows a worst-to-best word error ratio of 1.78, bracketed by the 1.84 racial ratio of \cite{koenecke2020racial} and EdAcc spreads \cite{sanabria2023edinburgh}.

\subsection{The Repair: ORCA on the Grid and on Real Speech}
\label{ssec:repair}
The last rows of Table~\ref{tab:grid} quantify the repair. On WavLM, ORCA raises all three fidelities at once ($F_{\mathrm{sem}}$ 0.52 to 0.61, $F_{\mathrm{voice}}$ 0.84 to 0.86, $F_{\mathrm{expr}}$ 0.66 to 0.74), cuts aggregate leakage by 72\% (0.29 to 0.08) and peak leakage from 0.62 to 0.31, and halves both gaps; on Whisper, aggregate leakage falls from 0.51 to 0.11 and the accent gap from 0.089 to 0.031, while the untouched recognition pipeline keeps its word error rates. Both repaired points fall far below the baseline regression (Fig.~\ref{fig:fairness}(b)), as Proposition~\ref{prop:fairness} predicts when $\Lambda$ shrinks and, through \eqref{eq:bridge}, Corollary~\ref{cor:peak} when $\ell$ shrinks. The repair carries to real speech: WavLM CTC-probe word error on EdAcc (LibriSpeech-trained probe) improves for every accent group, most for the most penalized, compressing the worst-to-best gap from 12.6 to 8.6 points (endpoints 16.1/28.7 to 14.9/23.5, ratio 1.78 to 1.58); the CREMA-D unweighted-recall gender gap falls from 4.4 to 1.8 points and the MSP-Podcast valence concordance gap halves from 0.07 to 0.03. No group is made worse to equalize.

\subsection{Ablations}
\label{ssec:ablations}
Ablating the WavLM objective dissects its mechanism (full ORCA: $\Lambda=0.08$, accent gap 0.019, EdAcc gap 8.6). Removing the orthogonality penalty lets aggregate leakage nearly triple to 0.21 and every gap widen (accent 0.034, EdAcc 10.4), confirming the penalty, the term Proposition~\ref{prop:fairness} certifies, does the fairness work. Removing group-balanced sampling preserves the leakage reduction ($\Lambda=0.10$) but surrenders much of the gap reduction (accent 0.029): geometry and data exposure are complementary. Single-axis variants barely move leakage ($\Lambda=0.26$, $0.27$): separating all three subspaces, not single-task strength, transfers to parity.

\section{Limitations and Future Directions}
\label{sec:limitations}
Each boundary fixes a sequel. Because a single text-to-speech system renders the grid, our profiles are perceived demographic attributes rather than self-reported demographics, so synthesizer-specific artifacts confound the attribution: a disparity could in part reflect how one synthesizer renders an accent rather than the accent itself. The convergent checks of Section~\ref{ssec:realvalid} bound but do not eliminate this confound; the next grid should render across multiple independent synthesizers and pair synthetic cells with consent-driven recordings \cite{porgali2023casual}. The theory certifies linear probes only; kernel probes and multilingual grids \cite{conneau2023fleurs} are natural continuations. Finally, ORCA's differentiable penalty \eqref{eq:orca} is architecture-agnostic, and its adoption as a training-time regularizer by model owners is the intervention our results most motivate.

\section{Conclusion}
\label{sec:conclusion}
We reframed demographic unfairness in audio understanding as a property of representation geometry. Under an additive factor model the average disparity of an accurate linear probe grows with the aggregate voice-semantic leakage, and at high peak leakage the worst-case-over-direction disparity is forced positive. The link is no technicality: aggregate leakage explains the measured probe disparity of ten encoders almost entirely ($r=0.93$), and an adapter minimizing it halves the measured gaps. Entanglement thus lower-bounds disparity: fairness here is a first-order geometry problem.

\section*{Acknowledgment}
The authors disclose that Claude Opus 4.8 was used for editing and rewriting all sections (Abstract, Introduction, Related Work, Methods, Experiments, Results, Conclusion) to improve the flow and presentation, and for assisting in implementation of the code. In addition, Gemini nano banana pro was used for generating diagrams (Figure 1).

\clearpage
\bibliographystyle{IEEEtran}
\bibliography{refs}

\end{document}